\theoremstyle{thmstyleone}%
\newtheorem{theorem}{Theorem}%  meant for continuous numbers
\theoremstyle{thmstyletwo}%
\theoremstyle{thmstylethree}%
\begin{document}

\title[NVEP is NPC]{The $n$-vehicle exploration problem is NP-complete}

%%=============================================================%%
%% GivenName	-> \fnm{Joergen W.}
%% Particle	-> \spfx{van der} -> surname prefix
%% FamilyName	-> \sur{Ploeg}
%% Suffix	-> \sfx{IV}
%% \author*[1,2]{\fnm{Joergen W.} \spfx{van der} \sur{Ploeg}
%%  \sfx{IV}}\email{iauthor@gmail.com}
%%=============================================================%%

\author*[1]{\fnm{Jinchuan} \sur{Cui}}\email{cjc@amss.ac.cn}

\author[2]{\fnm{Xiaoya} \sur{Li}}\email{xyli@amss.ac.cn}
%%\equalcont{These authors contributed equally to this work.}

%%\author[1,2]{\fnm{Third} \sur{Author}}\email{iiiauthor@gmail.com}
%\equalcont{These authors contributed equally to this work.}

\affil[1,2]{\orgdiv{Academy of Mathematics and Systems Science}, \orgname{Chinese Academy of Sciences}, \orgaddress{\street{No. 55 Zhongguancun East Road}, \city{Beijing}, \postcode{100190}, \country{China}}}

%\affil[2]{\orgdiv{Academy of Mathematics and Systems Science}, \orgname{Chinese Academy of Sciences}, \orgaddress{\street{No. 55 Zhongguancun East Road}, \city{Beijing}, \postcode{100190}, \country{China}}}

%%==================================%%
%% Sample for unstructured abstract %%
%%==================================%%

\abstract{The $n$-vehicle exploration problem (NVEP) is a nonlinear unconstrained optimization problem. Given a fleet of $n$ vehicles with mid-trip refueling technique, the NVEP tries to find a sequence of $n$ vehicles to make one of the vehicles travel the farthest, and at last all the vehicles return to the start point. NVEP has a fractional form of objective function, and its computational complexity of general case remains open. Given a directed graph $G$, it can be reduced in polynomial time to an instance of NVEP. We prove that the graph $G$ has a hamiltonian path if and only if the reduced NVEP instance has a feasible sequence of length at least $n$. Therefore we show that Hamiltonian path $\leq_P$ NVEP, and consequently prove that NVEP is NP-complete.}

\keywords{Combinatorial optimization, $n$-vehicle exploration problem (NVEP), NP-complete, Hamiltonian Path}

%%\pacs[JEL Classification]{D8, H51}

%\pacs[MSC Classification]{68Q25, 68Q17}

\maketitle

\section{Introduction}
\label{intro}

The $n$-vehicle exploration problem (NVEP) is motivated by a problem in a contest puzzle in China \cite{shuxue}: there are $n$ vehicles $V_1, \cdots, V_n$ with fuel capacities $a_i$ and fuel consumption rates $b_i$ for $1\leqslant i \leqslant n$. The $n$ vehicles depart from the same point to a same target at a same rate. Without fuel supply in the middle, but each vehicle can stop wherever to refuel other vehicles instantaneously, and then return to the start point. The goal is to determine a sequence $\pi = (\pi(1), \cdots, \pi(n))$ to make one of the vehicles travel the maximal distance, and at last all the vehicles return to the start point?

Consider a permutation order $\pi$ and its related sequence $V_{\pi(1)} \Rightarrow \cdots \Rightarrow V_{\pi(n)}$, where $V_{\pi(i)}$ refuels to $V_{\pi(j)}$ for any $i < j$. Let $D_{\pi} = \sum\limits_{i=1}^n{d_{\pi(i)}}$ denotes the maximal distance that the sequence can approach. $D_{\pi}$ consists of $n$ segmented parts as $d_{\pi(i)}$ which denotes the distance that $V_{\pi(i)}$ contributes to the total distance.

\begin{equation}\label{eqt1}
(NVEP_{LP})
\begin{array}{lll}
\max\ D_{\pi} = \sum\limits_{i=1}^n{d_{\pi(i)}}\\
s.t.\left\{\begin{array}{lll} & 2d_{\pi(n)}b_{\pi(n)} \leq a_{\pi(n)}\\
               & 2d_{\pi(n-1)}(b_{\pi(n-1)} + b_{\pi(n)}) + 2d_{\pi(n)}b_{\pi(n)}
\leq a_{\pi(n-1)} + a_{\pi(n)}\\
                  & \cdots\cdots \\
                   & 2d_{\pi(1)}(b_{\pi(1)} + b_{\pi(2)} + \cdots + b_{\pi(n)}) + \cdots +
2d_{\pi(n)}b_{\pi(n)}\\
                    & \leq a_{\pi(1)} + a_{\pi(2)} + \cdots + a_{\pi(n)}\\
\end{array}\right.\\
\end{array}
\end{equation}

The optimal solution to \eqref{eqt1} is described in \eqref{eqt2} and \eqref{eqt3}. Here, $d_{\pi(i)}$ equals to $d(V_{\pi(i)}, V_{\pi(i+1)})$ for $1 \leqslant i \leqslant n-1$. We define a virtual vehicle $V_{\pi(n)}^t$ as a destination point, and $d_{\pi(n)} = d(V_{\pi(n)}, V_{\pi(n)}^t)$.

\begin{equation}\label{eqt2}
D_{\pi} = \sum\limits_{i=1}^n{d_{\pi(i)}} = \sum\limits_{i=1}^n\frac{1}{2}{\Big(a_{\pi(i)} \bigg/ \sum\limits_{j=i}^n{b_{\pi(j)}}\Big)}\end{equation}

\begin{equation}\label{eqt3}
\left\{\begin{array}{llll} & d_{\pi(1)} &= d(V_{\pi(1)}, V_{\pi(2)}) &= \frac{a_{\pi(1)}}{b_{\pi(1)} + \cdots + b_{\pi(n)}} \\
                          & d_{\pi(2)} &= d(V_{\pi(2)}, V_{\pi(3)}) &= \frac{a_{\pi(2)}}{b_{\pi(2)} + \cdots + b_{\pi(n)}} \\
                          & \cdots &\cdots &\cdots \\
                          & d_{\pi(n)} &= d(V_{\pi(n)}, V_{\pi(n)}^t) &= \frac{a_{\pi(n)}}{b_{\pi(n)}} \\
                          \end{array}\right.\\
                          \end{equation}

Since each $D_{\pi}$ is related to a permutation of $n$ vehicles, there supposed to be $n!$ alternatives of $D_{\pi}$, which requires an exponential scale of alternatives before we could be sure of finding the optimal solution.

NVEP can be regarded as a variant of the jeep problem \cite{fine47} involving a convoy of jeeps which travel together, some being used to refuel others, with only one jeep required to travel the maximal distance \cite{phipps47}. A related variant of jeep problem is to determine the range of a fleet of $n$ aircrafts with different fuel capacities and fuel efficiencies. It is assumed that the aircrafts may share fuel in flight and that any of the aircrafts may be abandoned at any stage \cite{franklin60}. A same version of NVEP is to maximize the operational range of a nonidentical vehicle fleet \cite{Mehrez95}. Similarly, considering a fleet of airplanes with refueling technique, the airplane refueling problem \cite{woeginger10} aims to find a optimal drop out sequence to make one of the airplanes travel to the farthest point, which doesn't need all the airplanes fly back to the start point. Literature of such kind of exploration problems suggests that it can be applied in Arctic expeditions, interplanetary travel, such as long distance transportation, multistage rocket fuel, march route, etc.

Previous research posed a special case of NVEP that can be solved by an efficient algorithm, but the computational complexity of general case is still $O(n2^n)$ \cite{yu18}. A multi-task NVEP extended from the single-task NVEP was proved to be NP-hard \cite{xuyy12}. For airplane refueling problem, a polynomial-time approximation scheme was proposed by devising reductions from airplane refueling problem to generalized assignment problem \cite{gamzu19}. However, to our knowledge, there is no literature until now has proved the NP-completeness of NVEP.

In this paper we concentrate our efforts on proving the NP-completeness of NVEP. At first, it is hard for us to reduce a known NP-complete problem \cite{garey79} to NVEP. We finally choose Hamiltonian path to proceeding the reduction because NVEP has a natural graph-based analogue: it could be regarded as a directed graph, then each Hamiltonian path of a directed graph is possible to be equivalent to a sequence of NVEP. Here a Hamiltonian path in a directed graph is a directed path which passes through each vertex exactly once, but need not to return to its starting point \cite{tardos06}. So we prove the NP-completeness of NVEP, and the related problems of NVEP are probably computationally intractable.

\section{Preliminaries}
\label{sec:1}

Given a new problem $X$, the basic strategy for proving its NP-completeness is to choose a known NP-complete problem $Y$ and to prove that $Y \leq_{P} X$. Such a basic reduction strategy can be refined to the following outline of an NP-completeness proof. The refined outline tries to prove that $X$ is at least as hard as $Y$ \cite{tardos06}.

\begin{itemize}
\item[(1)] Prove that $X \in NP$.

\item[(2)] Choose a problem $Y$ that is known to be NP-complete.

\item[(3)] Consider an arbitrary instance $s_Y$ of problem $Y$, and show how to construct, in polynomial time, an instance $s_X$ of problem $X$ that satisfies the following properties:
        \begin{itemize}
        \item[(a)] If $s_Y$ is a "yes" instance of $Y$, then $s_X$ is a "yes" instance of $X$.

        \item[(b)] If $s_X$ is a "yes" instance of $X$, then $s_Y$ is a "yes" instance of $Y$.
        \item[ ] In other words, this establishes that $s_X$ and $s_Y$ have the same answer.
        \end{itemize}

\end{itemize}

The above reduction $Y \leq_{P} X$ consists of transforming a given instance of $Y$ into a single instance of $X$ with the same answer. It requires only a single innovation of the black box to solve $X$. Here, NVEP is problem $X$, and Hamiltonian path is problem $Y$.

\textbf{Problem $X$: decision version of NVEP}

\emph{Instance: Given a fleet of $n$ vehicles $V_1, \cdots, V_n$ with fuel capacities $a_i$ and fuel consumption rates $b_i$ ($1\leq i \leq n$).}

\emph{Question: Is there a sequence $V_{\pi} = (V_{\pi(1)}, \cdots, V_{\pi(n)})$ such that the corresponding travel distance $D_{\pi} \geq n$.}

\textbf{Problem $Y$: decision version of Hamiltonian path}

\emph{Instance: Given a directed graph $G = (V, E)$.}

\emph{Question: Is there a Hamiltonian path $P$ in $G$ that contains each vertex exactly once? (The path is allowed to start at any node and end at any node.)}

\section{Main result}
\label{sec:2}

Given a directed graph $G = (V, E)$ with $n$ nodes, to formulate it to a single NVEP instance, we will regard each node $v_i$ in $G$ as a vehicle $V_i$, and each edge $(v_i, v_j)$ as a consecutive vehicles $(V_i, V_j)$ in a sequence. Here $(V_i, V_j)$ denotes that $V_i \Rightarrow V_j$ in a feasible sequence of NVEP.

\begin{theorem}\label{thm:1} NVEP is NP-complete.\end{theorem}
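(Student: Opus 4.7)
My plan is to follow the three-step recipe from Section~\ref{sec:1}. First, NVEP lies in NP: given a permutation $\pi$ as a certificate, formulas~\eqref{eqt2}--\eqref{eqt3} let us compute $D_\pi$ in polynomial time from the inputs $a_i,b_i$, so the predicate $D_\pi\geq n$ admits a polynomial-time verifier.

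Second, I construct the reduction Hamiltonian Path $\leq_P$ NVEP. Given a directed graph $G=(V,E)$ with $|V|=n$, I would build an NVEP instance whose vehicles $V_1,\ldots,V_n$ correspond bijectively to the vertices of $G$, with parameters $a_i,b_i$ chosen so that the fractional sum $D_\pi$ acts as an ``edge detector'': specifically, (i)~$D_\pi\leq n$ for every permutation $\pi$, and (ii)~the Problem~$X$ threshold $D_\pi\geq n$ is attained exactly by those $\pi$ whose consecutive pairs $(V_{\pi(k)},V_{\pi(k+1)})$ are all arcs of $G$. The verification then splits into the two standard directions. In the forward direction, if $G$ admits a Hamiltonian path $v_{i_1}\to\cdots\to v_{i_n}$, the permutation $\pi=(i_1,\ldots,i_n)$ achieves $D_\pi\geq n$ by direct computation from~\eqref{eqt2}. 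In the backward direction I argue the contrapositive: if $\pi$ uses any non-edge as a consecutive pair, some term of $D_\pi$ loses a quantifiable amount and the sum falls strictly below $n$; hence any $\pi$ meeting the threshold traces out a Hamiltonian path.

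The genuine obstacle is the parameter design in step two. Each term $a_{\pi(k)}/\bigl(2\sum_{j\geq k}b_{\pi(j)}\bigr)$ depends on the entire tail of $\pi$, not just on $V_{\pi(k+1)}$, so purely ``pairwise'' edge information must be smuggled into a globally coupled quantity. I anticipate needing the $b_i$ on carefully separated numerical scales (so that the tail sum decomposes cleanly and the identity of $V_{\pi(k+1)}$, and whether it is an out-neighbor of $V_{\pi(k)}$, can be read off the denominator) matched with a choice of $a_i$ that makes each term equal $1$ exactly when the arc $(V_{\pi(k)},V_{\pi(k+1)})$ lies in $E$ and strictly less otherwise, so that the total telescopes to $n$ on Hamiltonian traces and to strictly less than $n$ on every other permutation. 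Once such parameters are exhibited, polynomial bit size of the construction is immediate, and combined with the equivalence of ``yes'' instances this establishes Hamiltonian Path $\leq_P$ NVEP and hence the theorem.
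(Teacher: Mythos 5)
Your step one (membership in NP) and your overall outline match the paper, but step two---the heart of the theorem---is never carried out: you describe the properties the parameters $a_i,b_i$ would need to have (an ``edge detector'' with terms equal to $1$ exactly on arcs of $G$) and then defer their construction (``Once such parameters are exhibited\dots''). That construction is the entire content of the reduction, so as written the proof has a genuine gap. Moreover, the specific design you anticipate cannot work term by term: by \eqref{eqt2} the $k$-th summand is $a_{\pi(k)}\big/\big(2\sum_{j\geq k} b_{\pi(j)}\big)$, which depends only on the vehicle $V_{\pi(k)}$ and on the \emph{unordered} tail set $\{\pi(k),\dots,\pi(n)\}$; it carries no information about which element of that tail occupies position $k+1$. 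Hence no choice of $a_i,b_i$ can make the $k$-th term individually certify that $(v_{\pi(k)},v_{\pi(k+1)})\in E$, and any encoding of adjacency into the single scalar $D_\pi$ would have to emerge from interactions across terms---something you neither exhibit nor argue is achievable. You correctly identify this global coupling as ``the genuine obstacle,'' but identifying an obstacle is not the same as overcoming it.

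For comparison, the paper does not attempt your pure-parameter encoding at all. It sets $a_i=i$ and $b_i=1/2$ and then injects the edge set of $G$ directly into the instance: it decrees $d(V_i,V_j)=0$ and forbids $V_i\Rightarrow V_j$ as consecutive vehicles whenever $(v_i,v_j)\notin E$, and it adds virtual destination vehicles $V_i^t$; a ``feasible sequence'' is then by definition edge-respecting, so a sequence with $n$ strictly positive terms must trace a Hamiltonian path. Your instinct---that a reduction ought to encode $E$ using only the data $(a_i,b_i)$ that Problem $X$ actually takes as input---is the more faithful and also the much harder route, but you have not supplied the gadget that makes it go through, so the argument is incomplete.
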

\begin{proof} It is easy to see that NVEP is in $\mathcal{NP}$: The certificate could be a permutation of the vehicles, and a certifier checks that the sequence contains each vehicle exactly once and the length of the corresponding trip is at least $n$.

We now show that Hamiltonian path $\leq_P$ NVEP, and we aim to show that any Hamiltonian path can be reduced in polynomial time to a feasible sequence of NVEP with length at least $n$.

Given a directed graph $G = (V, E)$ with $n$ nodes, we define $n$ vehicles with $a_i = i$ and $b_i = 1/2$ for $1 \leqslant i \leqslant n$. Such a fleet of $n$ vehicles are used to construct the reduced NVEP instance. For each node $v_i$ in $G$, there is a vehicle $V_i$ in NVEP. If there is an edge $e = (v_i, v_j)$ in $G$, we define $d_i = d(V_i, V_j) > 0$ and $V_i \Rightarrow V_j$ in a feasible sequence $V_{\pi}$ of NVEP. Otherwise, we define $d_i = d(V_i, V_j) = 0$ and $(V_i, V_j)$ can not be consecutive vehicles in a feasible sequence. Besides, we add some virtual nodes $v_i^{t}$ to $G$, and we define some virtual vehicles $V_i^{t}$ to NVEP. For each virtual node $v_i^{t}$ as an end node of Hamiltonian path, there is an edge between $v_i$ and $v_i^{t}$. Similarly for each vehicle $V_i^{t}$ as a destination point, we define $d_i = d(V_i, V_i^t) > 0$ and $V_i \Rightarrow V_i^{t}$ for $1 \leqslant i \leqslant n$.

Now we claim that the directed graph $G$ has a Hamiltonian path if and only if there is a feasible sequence of length at least $n$ in the reduced NVEP instance.

If $G$ has a Hamiltonian path $P_{\pi} = (v_{\pi(1)}, v_{\pi(2)}, \cdots, v_{\pi(n)}, v_{\pi(n)}^{t})$, then this order of the corresponding vehicles define a feasible sequence $V_{\pi} = (V_{\pi(1)}, V_{\pi(2)}, \cdots, V_{\pi(n)}, V_{\pi(n)}^{t})$ of length at least $n$. So a given Hamiltonian path $P_{\pi}$ is "spread" to the reduced NVEP instance by establishing a sequence $V_{\pi}$, which is also a feasible sequence of NVEP with total distance $D_{\pi}$ at least $n$.

Conversely, suppose there is a feasible sequence $V_{\pi}$ of the reduced NVEP instance with total distance at least $n$. The expression for the total distance $D_{\pi}$ of this feasible sequence is a sum of $n$ terms, each of which is greater than $0$; thus the feasible sequence must contain all the vehicles and has $n$ pairs of consecutive vehicles with $d(V_{\pi(i)}, V_{\pi(i+1)}) > 0$ for $1 \leqslant i \leqslant n-1$, and $d(V_{\pi(n)}, V_{\pi(n)}^t) > 0$. Hence each pair of nodes in $G$ that correspond to the $n$ pairs of consecutive vehicles in the feasible sequence must be connected by an edge. It follows that the ordering of these corresponding nodes must form a Hamiltonian path. \end{proof}

%If $D_{\pi} = n$, then the sequence consists of $n$ segmented parts, each of which has to be equal to $1$ to make sure that the total distance is equal to $n$; If $D_{\pi} > n$, we can set $a_{\pi(i)} = n + 1 -i$ and $b_{\pi(i)} = 1/2$ for each $1 \leq i \leq n$. Then the $D_{\pi}$ is still settled to be equal to $n$.
\section{Conclusion}
\label{sec:3}

Proving the NP-completeness of NVEP and its equivalent nonlinear optimization problems has shown particular difficulties before. The main difficulty is obvious, which is caused by the fractional form of objective function and what reduction to use. We prove the NP-completeness of NVEP by using the reduction "Hamiltonian path $\leq_{P}$ NVEP", which consists of encoding a given instance of Hamiltonian path into an instance of NVEP with the same answer. The reduction is refined from the basic reduction strategy to resort to Hamiltonian path as a known NP-complete problem. Consequently we prove that NVEP is NP-complete.

%\acknowledgements{\rm Thanks $\cdots$}
%% Please thank the anonymous people who make contributions to this article. If you don't want it, please delete it.

\end{document}